\def \draftmode {} 
\title{A rearrangement distance for fully-labelled trees}
\titlerunning{A rearrangement distance} 
\author[1]{Giulia Bernardini}
\author[1]{Paola Bonizzoni}
\author[1]{Gianluca Della Vedova}
\author[1]{Murray Patterson}
\affil[1]{DISCo, Universit\`{a} degli Studi Milano - Bicocca\\
	\texttt{giulia.bernardini@unimib.it},
	\texttt{paola.bonizzoni@unimib.it},
	\texttt{gianluca.dellavedova@unimib.it},
	\texttt{murray.patterson@unimib.it}}
\authorrunning{G. Bernardini et al.} 
\subjclass{G.2.2 Graph Theory (F.2.2)}
\keywords{Tree rearrangement distance, Cancer progression, Approximation algorithms, Computational complexity} 
\tikzset{
  every node/.style={
    align=center,
    draw=none,
  },
  -latex,
}
\newcommand{\ie}{\textit{i.e.}}
\newcommand{\eg}{\textit{e.g.}}
\newcommand{\cL}{\mathcal{L}}
\newcommand{\cT}{\mathcal{T}}
\newcommand{\cM}{\mathcal{M}}
\newcommand{\cC}{\mathcal{C}}
\newcommand{\cP}{\mathcal{P}}
\newcommand{\cX}{\mathcal{X}}
\newcommand{\cI}{\mathcal{I}}
\newcommand{\cS}{\mathcal{S}}
\newcommand{\tone}{T_1^H}
\newcommand{\ttwo}{T_2^H}
\newcommand{\edge}[3]{\ensuremath{#1 | #2 \rightarrow #3}}
\newcommand{\dm}{d_\Delta}
\begin{document}

\maketitle
  
\begin{abstract}
  The problem of comparing trees representing the evolutionary
  histories of cancerous tumors has turned out to be crucial, since
  there is a variety of different methods which typically infer
  multiple possible trees.  A departure from the widely studied
  setting of classical phylogenetics, where trees are leaf-labelled,
  tumoral trees are fully labelled, \ie, \emph{every} vertex has a
  label.

  In this paper we provide a rearrangement distance measure between
  two fully-labelled trees.  This notion originates from two
  operations: one which modifies the topology of the tree, the other
  which permutes the labels of the vertices, hence leaving the
  topology unaffected.  While we show that the distance between two
  trees in terms of each such operation alone can be decided in
  polynomial time, the more general notion of distance when both
  operations are allowed is NP-hard to decide.  Despite this result,
  we show that it is fixed-parameter tractable, and we give a
  4-approximation algorithm when one of the trees is binary.
\end{abstract}

\ifdefined \draftmode
\clearpage
\setcounter{page}{1}
\fi

\section{Introduction}
\label{section:introduction}

Tree rearrangement concerns modifying the topology of a set of
elements arranged in a tree and has a large
literature~\cite{semple2003phylogenetics} on phylogenies, that is,
trees whose leaves, and only leaves, are labelled by \emph{taxa}.
This setting is convenient, since it is a faithful model of the actual
problem that biologists want to solve: to find a plausible
evolutionary history that can explain a set of extant species (or
individuals)~\cite{felsenstein:inferring-phylogenies}
--- the internal nodes being the hypothetical ancestral taxa, which
are extinct.  The problem of inferring phylogenies is centuries old,
and there is a rich literature on computational methods, which fall
into major groups such as parsimony
methods~\cite{farris-1970-methods,fitch-1971-toward}, or
maximum-likelihood
methods~\cite{chor-2005-maximum,felsenstein:inferring-phylogenies}.
With the wealth of different methods for inferring phylogenetic trees
on a given set of taxa, a notion of \emph{rearrangement distance}
between output trees can be useful to assess the reliability of
methods or even the data itself in inferring such trees.  This sparked
a body of research on rearrangement distances for phylogenetic trees,
resulting in rearrangement distance measures such as \emph{nearest
  neighbor interchange} (NNI)~\cite{dasgupta-1997-on}, \emph{subtree
  prune and regraft} (SPR)~\cite{bordewich-2005-on} and \emph{tree
  bisection and reconnection} (TBR)~\cite{allen-2001-subtree} ---
interestingly, all such rearrangement distances are NP-hard to decide.
See~\cite{kuhner-2014-practical} for a comprehensive survey of the
above rearrangement distance measures, as well as other more general
distance measures for phylogenetic trees, such as the classical
Robinson-Foulds distance~\cite{rf-distance-1979}.

This paper focuses on a different, albeit related, biological field:
the study of cancer progression.  While the theory on cancer
progression as an evolutionary process is several decades
old~\cite{nowell-1976-the}, only in the past decade are data appearing
that allow us to reconstruct in detail the evolutionary history of the
progression of various cancers~\cite{Jiao2014,Hajirasouliha2014} ---
providing insight into drug resistance and devising therapeutic
strategies~\cite{Morrissy2016,Wang2016}.  In this setting, we have one
or more tumor samples where the taxa are cancer
\emph{clones}~\cite{nowell-1976-the,Hajirasouliha2014}, or groups of
cancer cells at various stages of mutation --- all of which originate
from a single \emph{driver mutation}, and the goal is again to
construct the most likely evolutionary history of these clones.  The
key difference in this setting is that --- since the tumor is only
months old --- all of the clones, even the one representing only the
driver mutation, are present in the samples, \ie, the internal nodes
are extant taxa.  In this setting, the inferred evolutionary history
is rather a \emph{fully-labelled} tree, where a label represents a
single mutation that has been acquired by a clone during evolution. It
is quite common to assume that the evolution of mutations follows the
{\em infinite sites
  assumption}~\cite{Jahn2016,Hajirasouliha2014,el-kebir_reconstruction_2015}
which implies that once a mutation is acquired in a node it is never
lost, and thus it will be present in all the clones associated with
the descendants of the node.  The above assumption motivates the fact
that we can label each node with a single mutation.  More recently,
this assumption has been challenged~\cite{Kuipers094722}.

There is already a wealth of methods for inferring such fully-labelled
cancer evolutionary trees, most of them leveraging bulk sequencing
data~\cite{Jiao2014,Hajirasouliha2014,Yuan2015,el-kebir_reconstruction_2015,bonizzoni2017beyond,DBLP:conf/iccabs/CiccolellaGPVHB18},
however methods taking advantage of higher resolution Single Cell
Sequencing (SCS) technologies~\cite{Jahn2016,Ross2016} --- even some
hybrid methods --- are beginning to appear~\cite{Salehi2017}.  With
the amount of data and methods becoming available for inferring cancer
evolution, a main challenging problem turns out to be the comparison
of the multiple trees that are produced by a single method or by
different approaches, see, \eg,~\cite{popic-lichee}.  The
investigation of operations for defining the rearrangement distance
between trees output by these methods is still in its infancy and
require the comparison of trees over the same set of mutations, \ie,
labels.  Indeed, most recent works, \eg,~\cite{govek-2018-consensus}
are mainly focused on defining a consensus tree on path, or on
ancestor-based distance measures, rather than on transforming a tree
into another considering also the topology of the trees and the
ancestor-descendant relationship of the labels --- since \emph{all}
nodes of the trees are now labelled.

While we concentrate in this work on rearrangement distance between
fully-labelled trees, this move to fully-labelled trees opens up the
discussion to the more general \emph{edit distance} between
fully-labelled trees, purportedly introduced
in~\cite{tai-1979-correction}, and of which there is a sizeable
literature, \eg,~\cite{zhang-1989-editing,mcvicar-2016-sumoted}.
There is even a comprehensive survey on the topic
in~\cite{bille-2005-survey}, while a recent implementation is reported
in~\cite{pawlik-2015-edit,pawlik-2015-edit}.  Even in the context of
cancer progression, a recent paper~\cite{DBLP:conf/wabi/KarpovMRS18}
provides a notion of edit distance for multi-labelled trees has been
defined with the goal of reconciling two trees over distinct sets of
labels into a common one.


In this work, we open the investigation of some notions of the
rearrangement distance for two rooted trees which are fully labelled
by the same set of labels.  Following the existing
literature~\cite{semple2003phylogenetics,steel_phylogeny:2016} on
phylogeny rearrangement, we extend to several operations for
rearranging a fully-labelled tree.  The distance between a pair of
trees is then the shortest sequence of these operations that
transforms the first tree into the second tree.  The first operation
we introduce is an adaptation of the SPR
operation~\cite{bordewich-2005-on} to a fully-labelled tree.  We
introduce a second operation that consists of a permutation of the
labels of the tree --- notice that such an operation does not really
make sense on leaf-labelled phylogenies.  For both operations, we
provide an algorithm for computing the shortest sequence of operations
needed to transform an input tree into a second input tree.  Then we
extend this rearrangement measure by allowing both operations: we show
that the new computational problem of finding a shortest sequence of
operations is NP-hard, but we give a $4$-approximation ratio and a
fixed parameter algorithm.


%
\section{Preliminaries}
\label{section:preliminaries}

A \emph{tree} is an undirected connected graph $T = (V_T, E_T)$
without cycles: its degree-one vertices are called \emph{leaves},
while the remaining vertices are called \emph{internal} vertices.
Trees $T_1$ and $T_2$ are is \emph{isomorphic}, and we write $T_1
\cong T_2$, if there is a bijective (or one-to-one) mapping $m:
V_{T_1} \rightarrow V_{T_2}$ such that $(u,v) \in E_{T_1}$ iff $(m(u),
m(v)) \in E_{T_2}$.  Such a mapping is referred to as an
\emph{isomorphic mapping}, or an \emph{isomorphism}.

A \emph{rooted} tree has an edge between some vertex $w \in V_T$ and
an extra \emph{root} vertex $\lambda_T \not \in V_T$ which has been
added, implicitly directing the edges, \eg, away from the root.  We
can hence define the \emph{parent} and \emph{child} relationships,
$p_T: V_T \rightarrow V_T \cup \{\lambda_T\}$ and
$c_T: V_T \cup \{\lambda_T\} \rightarrow 2^{V_T}$ respectively, where
$p_T(v) =u$ (resp., $v \in c_T(u)$) if
$(u,v) \in E_T \cup \{(\lambda_T, w)\}$ and $u$ is on the path from
$\lambda_T$ to $v$.  Note that since the children $c_T(u)$ of some
vertex $u$ is a \emph{set}, hence they are \emph{unordered}, unlike in
some notions defined in~\cite{bille-2005-survey} where an ordering can
be specified.  Moreover, $|c_T(u)|$ for any $u$ is not of any fixed
size, \ie, vertices are of \emph{unbounded degree} --- the trees are
not necessarily binary, for example.  More generally, we say that a
node $u$ is an \emph{ancestor} of a node $v$ if $u$ is on the path
from $\lambda_T$ to $v$, and conversely, $v$ is a \emph{descendant} of
$u$.  We extend the above notion of isomorphism to a pair $T_1$, $T_2$
of rooted trees by adding the condition that
$m(\lambda_{T_1}) = \lambda_{T_2}$.

A tree $T$ is \emph{fully labelled} when its vertices $V_T$ are in a
one-to-one correspondence with some set $\cL$ of labels, implying that
$|V_T| = |\cL|$.  Since all trees in this paper are rooted and fully
labelled, we will henceforth use the term \emph{tree} to denote a
rooted, fully labelled tree, and the expression \emph{tree $T$
  labelled by $\cL$} to denote a rooted tree $T$, fully labelled by
the set $\cL$ of labels.  Trees $T_1$ and $T_2$ each labelled by $\cL$
are \emph{congruent} if $T_1$ and $T_2$ are isomorphic, and one of the
isomorphisms $m$ also has the property that for every $u \in V_{T_1}$,
$u$ and $m(u)$ have the same label.  Note that if $T_1$ and $T_2$ are
congruent, then the unique child of $\lambda_{T_1}$ has the same label
as the unique child of $\lambda_{T_2}$.  Since all vertices of $T_1$
and $T_2$ are uniquely labelled and $\lambda_{T_1}$, $\lambda_{T_2}$
are special, we refer to a vertex of a tree and its label
interchangeably, when this has no effect, while clearly distinguishing
them in contexts where it matters.

In this paper, we study some notions of distance between pairs $T_1$,
$T_2$ of trees based on some \emph{rearrangement operations} that
transform $T_1$ into some $T_1'$ that is congruent to $T_2$. For the
sake of simplicity, from now on we will slightly abuse terminology in
saying that a sequence of operations transforms $T_1$ into $T_2$.
Given a tree $T$ labelled by $\cL$, the operations are:
\vspace*{.1cm}
\begin{itemize}
\item \textbf{link-and-cut operation}: given labels $v$, $p_T(v) = u$
  and a third label $w$ which is not a descendant of $v$, remove the
  edge $(u,v)$ and add the edge $(w,v)$, effectively switching the
  parent $p_T(v)$ of $v$ from $u$ to $w$.  We denote this operation
  $\edge{v}{u}{w}$.
\item \textbf{permutation operation}: apply some permutation $\pi :
  \cL \rightarrow \cL$ to the labels of $V_T$.  Each label $v \in \cL$
  of $T$ will have the new label $\pi(v)$ after this operation.
\end{itemize}
\vspace*{.1cm}
Notice that the link-and-cut operation modifies the topology of the
tree, while the permutation operation shuffles the labels without
affecting the topology.  The link-and-cut operation is so called,
following the terminology of~\cite{sleator-1981-dynamic} --- in this
article they define the two operations separately, while ours is a
certain combination of them.  Our link-and-cut operation is quite
similar to the \emph{subtree moving} (\emph{edit distance}) operation
of~\cite{mcvicar-2016-sumoted}, however the operation here has the
constraint that the new parent $w$ of child $v$ must be within a
certain distance in the tree from the original parent $u$ --- the only
restriction on our link-and-cut operation is that $w$ cannot be a
descendant of $v$.  Both of these operations are invertible: if an
operation $\sigma$ transforms $T$ into $T'$, then its inverse
operation $\sigma^{-1}$ transforms $T'$ into $T$.  For example, if
$T'$ was obtained by applying $\edge{v}{u}{w}$ to $T$, then applying
$\edge{v}{w}{u}$ to $T'$ results in $T$.  Similarly, if $T'$ was
obtained by applying $\pi$ to $T$, then applying $\pi^{-1}$ to $T'$
results in $T$.  By induction, any sequence of the above operations is
invertible.

\begin{figure}

  \tikzset{
    level 2/.style={
      sibling distance=3cm,
    },
    level 3/.style={
      sibling distance=1.2cm,
    },
  }

  \centering
  \begin{tikzpicture}
    \node (lambda) {$\lambda$}
    child { node {$a$}
      child { node {$b$}
        child { node {$d$} }
        child { node {$e$} }
        child { node {$f$} } }
      child { node {$c$}
        child { node {$g$} }
        child { node {$h$} } } } ;
    \node[draw=none] at (0,-5.5) {\Large $T_1$} ;
  \end{tikzpicture}
  \hspace*{1.2cm}
  \begin{tikzpicture}
    \node (lambda) {$\lambda$}
    child { node {$a$}
      child { node {$d$}
        child { node {$b$} }
        child { node {$e$} } }
      child { node {$c$}
        child { node {$g$} }
        child { node {$f$} }
        child { node {$h$} } } } ;
    \node[draw=none] at (0,-5.5) {\Large $T_2$} ;
  \end{tikzpicture}

  \caption{Trees $T_1$ and $T_2$ labelled by $\cL =
    \{a,b,c,d,e,f,g,h\}$.  The link-and-cut distance
    $d_{\ell}(T_1,T_2)$ between $T_1$ and $T_2$ is 4 --- the sequence
    $\edge{d}{b}{a}; ~\edge{e}{b}{d}; ~\edge{f}{b}{c};
    ~\edge{b}{a}{d}$ being an example of a smallest sequence of such
    operations transforming $T_1$ into $T_2$.  Notice that the
    operation $\edge{b}{a}{d}$ only becomes valid after performing
    $\edge{d}{b}{a}$.  The permutation distance is $d_{\pi}(T_1,T_2) =
    6$, for example, $\pi = (b~g~d~c)(e~h)$.  Finally, the
    rearrangement distance is $d(T_1,T_2) = 3$, for example, $(b~d);
    ~\edge{f}{d}{c}$.}
  \label{fig:dist}
\end{figure}

Additionally, by the definition of permutation, a sequence $\cS =
\pi_1, \dots, \pi_k$ of permutation operations can be expressed as the
single $\pi = \pi_k \cdot \pi_{k-1} \cdots \pi_2 \cdot \pi_1$ --- the
composition of the permutations of sequence $\cS$.
Finally, link-and-cut and permutation operations are interchangeable
in a sequence of operations, in the sense that the application of the
link-and-cut operation $\edge{v}{u}{w}$ followed by a permutation
$\pi$, has the same effect as $\pi$ followed by the link-and-cut
operation $\edge{\pi(v)}{\pi(u)}{\pi(w)}$ --- inspect
Figure~\ref{fig:dist} and Example~\ref{ex:interchange}.  We have the
following important property.

\begin{lemma}
  \label{lemma:permutation-first}
  Let $T_1$ and $T_2$ be each labelled by $\cL$, and $\cS = \sigma_1,
  \ldots, \sigma_k$ be a sequence of $k$ operations that transforms
  $T_1$ into $T_2$.  Then there exists a sequence $\cS^* = \sigma_1^*,
  \ldots, \sigma_k^*$ of $k$ operations in which all permutation
  operations precede all link-and-cut operations, and $\cS^*$
  transforms $T_1$ into $T_2$.  We say that $\cS^*$ is (sequence
  $\cS$) in \emph{canonical form}.
\end{lemma}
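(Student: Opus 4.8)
The plan is to use the interchange property stated just before the lemma as the single atomic move, and to show that repeatedly applying it sorts any sequence into canonical form without changing its length or its net effect. I would first encode a sequence $\cS = \sigma_1, \ldots, \sigma_k$ as a word over the two-letter alphabet $\{L, P\}$, where the $i$-th letter records whether $\sigma_i$ is a link-and-cut ($L$) or a permutation ($P$). Call a pair of positions $i < j$ an \emph{inversion} if $\sigma_i$ is a link-and-cut and $\sigma_j$ is a permutation. A sequence is in canonical form exactly when it has no inversions: once the first link-and-cut appears, every later operation must also be a link-and-cut, so all permutations precede all link-and-cuts.

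The proof would then proceed by induction on the number of inversions of $\cS$. If $\cS$ has no inversions it is already canonical and we take $\cS^* = \cS$. Otherwise there must be an \emph{adjacent} inversion, that is, a position $i$ with $\sigma_i$ a link-and-cut $\edge{v}{u}{w}$ and $\sigma_{i+1}$ a permutation $\pi$. I would replace this consecutive pair by $\pi$ followed by $\edge{\pi(v)}{\pi(u)}{\pi(w)}$, leaving all other operations untouched. By the interchange property this substitution does not change the composite map applied to the tree, and since function composition is associative the overall transformation effected by the whole length-$k$ sequence is preserved; in particular the new sequence still transforms $T_1$ into $T_2$, and it still consists of exactly $k$ operations.

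It remains to check that this local move strictly decreases the inversion count, which is the only real bookkeeping in the argument and the step I expect to require the most care. The swap turns the adjacent pair from $LP$ into $PL$, removing that one inversion. For every other position $h$, the two pairs that involve $h$ together with the positions $\{i, i+1\}$ depend only on the symbol at $h$ and on the \emph{multiset} of symbols sitting at positions $i$ and $i+1$, which is unchanged by the swap; hence their joint contribution to the inversion count is unaffected. Therefore the total number of inversions drops by exactly one, the induction hypothesis applies to the new sequence, and we obtain a canonical $\cS^*$ of length $k$ transforming $T_1$ into $T_2$, as required.
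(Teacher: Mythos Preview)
Your proposal is correct and follows essentially the same approach as the paper: repeatedly swap an adjacent link-and-cut/permutation pair using the interchange property until the sequence is in canonical form. The paper's proof simply says ``by induction on this interchange'' without naming the measure, whereas you make the termination argument explicit via the inversion count; this is more careful but not a different idea.
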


\begin{proof}
  We continue to swap consecutive pairs $\sigma_i$, $\sigma_{i+1}$, $1
  \le i < k$, of operations in sequence $\cS$, where $\sigma_i =
  \edge{u}{v}{w}$ is a link-and-cut operation and $\sigma_{i+1} = \pi$
  is a permutation operation, with the pair $\sigma_{i+1}$, $\sigma_i'
  = \edge{\pi(v)}{\pi(u)}{\pi(w)}$, until we obtain a sequence $\cS^*$
  in which all permutation operations precede all link-and-cut
  operations.  By induction on this interchange, the resulting $\cS^*$
  is of length $k$ and transforms $T_1$ into $T_2$.
\end{proof}

\begin{example}
  \label{ex:interchange}
  Consider $T_1$ and $T_2$ of Figure~\ref{fig:dist}. The application
  of the link-and-cut operation $\edge{f}{b}{c}$ followed by the
  permutation $\pi=(b~d)$, has the same effect as $\pi$ followed by
  the link-and-cut operation
  $\edge{\pi(f)}{\pi(b)}{\pi(c)}=\edge{f}{d}{c}$.
\end{example}

We now give the following notions of \emph{distance} between trees
$T_1$ and $T_2$ labelled by $\cL$.
\begin{definition}[link-and-cut distance]
  \label{definition:edge}
  The \emph{link-and-cut distance} $d_{\ell}(T_1,T_2)$ is the length
  of the shortest sequence of link-and-cut operations which transforms
  $T_1$ into $T_2$.
\end{definition}
The following Lemma ensures that the definition of link-and-cut
distance is well posed.  See also Figure~\ref{fig:dist}.
\begin{lemma}
\label{lemma:wellposed}
Given trees $T_1$ and $T_2$ each labelled by $\cL$, there always
exists a sequence of link-and-cut operations that transforms $T_1$
into $T_2$.
\end{lemma}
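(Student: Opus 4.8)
The plan is to give an explicit, constructive sequence of link-and-cut operations. First I would fix a linear order $v_1, v_2, \ldots, v_n$ of the labels of $\cL$ that follows a top-down traversal of $T_2$ (a breadth-first or pre-order traversal), so that $p_{T_2}(v_i)$ always occurs strictly before $v_i$ in the order; here $v_1$ is the unique child of $\lambda_{T_2}$, for which $p_{T_2}(v_1) = \lambda_{T_2}$. I would then process the labels in this order, maintaining a current tree $T$ initialised to $T_1$: when processing $v_i$, if its current parent $p_T(v_i)$ already equals $p_{T_2}(v_i)$ I do nothing, and otherwise I apply the single operation $\edge{v_i}{p_T(v_i)}{p_{T_2}(v_i)}$, switching the parent of $v_i$ to its target parent in $T_2$. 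This yields a sequence of at most $n = |\cL|$ operations.

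The crucial point, and the one to argue carefully, is that every such operation is \emph{valid}, i.e.\ that the intended new parent $p_{T_2}(v_i)$ is never a descendant of $v_i$ in the current tree $T$. The key observation is that a link-and-cut operation changes exactly one parent pointer, namely that of the label it moves; hence, after $v_1, \ldots, v_{i-1}$ have been processed, each $v_j$ with $j < i$ already satisfies $p_T(v_j) = p_{T_2}(v_j)$. I would then prove by induction that the path from $\lambda_T$ down to any already-processed label $v_j$ in the current tree coincides with its $T_2$-path, and therefore consists only of labels whose index is at most $j$. Since $p_{T_2}(v_i) = v_j$ for some $j < i$ (or $\lambda_T$ when $i = 1$), the label $v_i$, whose index exceeds $j$, cannot lie on the path from $\lambda_T$ to $p_{T_2}(v_i)$; equivalently, $p_{T_2}(v_i)$ is not a descendant of $v_i$, which is exactly the admissibility condition for $\edge{v_i}{p_T(v_i)}{p_{T_2}(v_i)}$.

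After all $n$ labels have been processed, every label $v$ satisfies $p_T(v) = p_{T_2}(v)$, so the resulting tree has precisely the parent--child relation of $T_2$ and is therefore congruent to $T_2$ via the identity labelling, which establishes the claim. The only delicate case is the root: for $v_1$ the target parent is $\lambda_{T_2}$, and reparenting $v_1$ to $\lambda_T$ is always admissible, since the root is an ancestor of every vertex and thus never a descendant of $v_1$; treating $\lambda_T$ as an ordinary root that may temporarily acquire several children, this step is harmless because only $v_1$ has $\lambda_{T_2}$ as its $T_2$-parent, so the final tree has $v_1$ as its unique root child, as required for congruence. I expect this root handling, together with verifying the invariant that keeps every reparenting admissible, to be the main obstacle; once the traversal order is chosen so that each target is processed before the vertices that must point to it, the descendant condition follows essentially for free from the fact that a single operation touches exactly one parent pointer.
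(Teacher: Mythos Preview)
Your argument is correct, and it takes a genuinely different route from the paper's. The paper argues locally: whenever the desired operation $\edge{v}{u}{w}$ is blocked because $w=p_{T_2}(v)$ lies below $v$ in $T_1$, it exhibits some node $z$ on the $v$--$w$ path in $T_1$ whose own target parent $p_{T_2}(z)$ is neither below $v$ in $T_2$ nor below $z$ in $T_1$; performing $\edge{z}{p_{T_1}(z)}{p_{T_2}(z)}$ first drags $w$ out from under $v$ and unblocks the original move. (The paper later remarks, after Definition~\ref{definition:partition}, that ordering the active-set operations by a depth-first traversal of $T_1$ realises this.) Your approach is global and dual: you order by a top-down traversal of the \emph{target} tree $T_2$ and maintain the invariant that the already-processed labels sit above the current label with their $T_2$-parents already installed, so the intended new parent can never be a descendant. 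This yields a cleaner, fully explicit sequence and makes the validity check a one-line consequence of the invariant, whereas the paper's unblocking argument is shorter to state but leaves the full sequencing somewhat implicit. Your handling of the root---allowing $\lambda_T$ to acquire a second child transiently---is the one place where you lean on a convention the paper leaves tacit (the link-and-cut definition speaks of a ``third label'' $w$), but the paper's own proof implicitly needs the same latitude whenever the root children of $T_1$ and $T_2$ differ, so this is not a defect of your argument relative to theirs.
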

\begin{proof}
For any node $v$, $p_{T_1}(v)=u$, such that $p_{T_2}(v)=w$ and $w$ is
a descendant of $v$ in $T_1$ --- and thus the operation
$\edge{v}{u}{w}$ is not directly applicable --- we prove that there
exists a node $z$ on the path from $v$ to $w$ in $T_1$ (including $w$)
such that $p_{T_2}(z)$ is not a descendant of $v$ in $T_2$ nor a
descendant of $z$ in $T_1$.  This implies that after applying the
valid operation $\edge{z}{p_{T_1}(z)}{p_{T_2}(z)}$, the operation
$\edge{v}{u}{w}$ becomes valid too.  There is always such a node $z$
because, should it not exist, $w$ would be a descendant of $v$ also in
$T_2$, giving rise to the cycle $(w \rightarrow v \rightarrow \cdots
\rightarrow w)$ and thus contradicting the fact that $T_2$ is a tree.
\end{proof}
Note that, given a \emph{permutation} $\pi$ of some set $S$ of
elements, we denote its \emph{size} $|\pi|$ as the number of elements
perturbed by $\pi$, \ie, the size of the set $\{ s \in S ~:~ \pi(s)
\ne s \}$.
\begin{definition}[permutation distance]
  \label{definition:permutation}
  The \emph{permutation distance} $d_{\pi}(T_1, T_2)$ is the size
  $|\pi|$ of the smallest permutation $\pi$ that transforms $T_1$ into
  $T_2$.
\end{definition}
Finally, we also define the \emph{size} $|\cS|$ of a sequence $\cS$ of
rearrangement operations as the size of the permutation obtained by
composing the permutations of $\cS^*$ plus the length of the sequence
of link-and-cut operations of $\cS^*$, where $\cS^*$ is sequence $\cS$
in canonical form.
\begin{definition}[rearrangement distance]
  \label{definition:rearrangment}
  The \emph{rearrangement distance} $d(T_1,T_2)$ is the smallest size of any 
  sequence of operations that transforms $T_1$ into $T_2$.
\end{definition}
%
Clearly, the permutation distance $d_{\pi}(T_1,T_2)$ is defined only
when $T_1$ and $T_2$ are isomorphic, and it is evidently well
posed. As a direct consequence of this and
Lemma~\ref{lemma:wellposed}, the definition of rearrangement distance
is also well posed.  Moreover, since these operations are invertible,
all the above distance measures are symmetric, and they satisfy by
definition the triangle inequality: consider, \eg, the rearrangement
distance. Given $T_1$, $T_2$ and $T_3$ labelled by the same set of
labels, let $\cS_{12}$ be a sequence that transforms $T_1$ into $T_2$
such that $|\cS_{12}|=d(T_1,T_2)$, $\cS_{23}$ a sequence that
transforms $T_2$ into $T_3$ with $|\cS_{23}|=d(T_2,T_3)$, $\cS_{13}$ a
sequence that transforms $T_1$ into $T_3$ and $|\cS_{13}|=d(T_1,T_3)$.
It is evident that the concatenation $\cS_{12} \cS_{13}$ of $\cS_{12}$
and $\cS_{23}$ is a sequence that transforms $T_1$ into $T_3$, and by
Definition~\ref{definition:rearrangment} its size is larger or equal to
$d(T_1,T_3)$: thus $d(T_1,T_2) + d(T_2,T_3) \ge |\cS_{12} \cS_{23}|\ge
d(T_1,T_3)$.  A similar argument shows that the triangular inequality
also holds for the link-and-cut distance and the permutation distance.

We now have the important structures.

\begin{definition}[active set]
  \label{definition:active}
  Given trees $T_1$ and $T_2$ each labelled by $\cL$, we call
  \emph{active} the subset $\cX \subseteq \cL$ of labels which have
  different parents in $T_1$ and $T_2$, \ie, $v \in \cX$ iff
  $p_{T_1}(v) \ne p_{T_2}(v)$.
\end{definition}

Given trees $T_1$ and $T_2$ each labelled by $\cL$, for each vertex
$v$ of the active set $\cX$, we can associate with $v$ the pair
$(p_{T_1}(v),p_{T_2}(v))$ of the parents of $v$ in the two trees.  Let
$\cP_{(u,w)}$ be the set $\{v ~:~ p_{T_1}(v) = u, ~ p_{T_2}(v) = w\}$
--- since each vertex has exactly one parent in each tree, each vertex
$v\in \cX$ belongs to one and only one set $\cP_{(u,w)}$.  This fact is
formalized in the following definition and illustrated in
Example~\ref{ex:partition}.

\begin{definition}[family partition]
  \label{definition:partition}
  Let trees $T_1$ and $T_2$ each be labelled by $\cL$: for each vertex
  $v \in \cX$ we denote the set $\cP_{(u,w)} = \{v ~:~ p_{T_1}(v) = u,
  ~ p_{T_2}(v) = w\}$.  Then $\cP$ is the partition of set $\cX$ into
  the nonempty sets $\cP_{(u,w)}$, $u,w \in V$.  Partition $\cP$ is
  called the \emph{family partition} of the active set $\cX$, and we
  denote its \emph{size} $|\cP|$ as the number of different
  (non-empty) subsets $\cP_{(u,w)}$ it is composed of.
\end{definition}

\begin{example}
  \label{ex:partition}
  Consider $T_1$ and $T_2$ of Figure~\ref{fig:dist}. The active set is
  $\cX=\{b,d,e,f\}$.  The family partition is composed of the
  following sets: $\cP_{(a,d)}=\{b\}$, $\cP_{(b,a)}=\{d\}$,
  $\cP_{(b,d)}=\{e\}$, $\cP_{(b,c)}=\{f\}$.
\end{example}

Note that the family partition encodes the elements of any shortest
sequence of link-and-cut operations for transforming $T_1$ into $T_2$:
$v \in \cP_{(u,w)}$ corresponds to operation $\edge{v}{u}{w}$. It is
easy to see, from the proof of Lemma~\ref{lemma:wellposed}, that a
shortest sequence of valid link-and-cut operations can be obtained from
$\cP$ by ordering the set of operations it encodes with respect to a
\emph{depth-first traversal} (DFT) of $T_1$:
$\edge{u}{p_{T_1}(u)}{p_{T_2}(u)}$ precedes
$\edge{v}{p_{T_1}(v)}{p_{T_2}(v)}$ if $u$ precedes $v$ in a DFT of
$T_1$. Hence $d_{\ell}(T_1,T_2)=|\cX|$, \ie, the link-and-cut distance
is equal to the cardinality of the active set, of which $\cP$ is a
partition.

\section{Computational complexity}
\label{section:complexity}

In this section we determine the complexity of computing the distance
between two trees labelled by the same set of labels, in terms of the
three distance measures defined in
Section~\ref{section:preliminaries}.  More precisely, despite the fact
that the link-and-cut and the permutation distances are
polynomial-time computable, computing the rearrangement distance is
NP-hard.

\subsection{Link-and-cut distance}
\label{subsection:edge}

We first show that we can compute the link-and-cut distance between
two trees in linear time by showing that the family partition can be
built in linear time.

\begin{lemma}
  \label{lemma:edge}
  The link-and-cut distance $d_{\ell}(T_1,T_2)$ between trees $T_1$
  and $T_2$ each labelled by $\cL$ can be computed in time $O(|\cL|)$.
\end{lemma}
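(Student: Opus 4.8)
The plan is to show that computing the link-and-cut distance reduces to constructing the family partition $\cP$, and that this can be done in $O(|\cL|)$ time. By the discussion immediately preceding the statement, we already know that $d_{\ell}(T_1,T_2) = |\cX|$, the cardinality of the active set. Since $\cX$ is exactly the set of labels $v$ with $p_{T_1}(v) \ne p_{T_2}(v)$, it suffices to show that we can determine, for every label $v \in \cL$, the pair $(p_{T_1}(v), p_{T_2}(v))$ of its parents in the two trees, and then count those labels for which the two entries differ.

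First I would describe the data representation. We are given the two trees $T_1$ and $T_2$, each labelled by $\cL$, so $|V_{T_1}| = |V_{T_2}| = |\cL|$. A single traversal (say, a DFT) of $T_1$ lets us record, in an array indexed by label, the parent $p_{T_1}(v)$ of each vertex $v$; likewise a single traversal of $T_2$ records $p_{T_2}(v)$. Each traversal visits every vertex and every edge once, so each costs $O(|\cL|)$ time, and since the labels are in one-to-one correspondence with $V_T$ we may use them directly as array indices. I would then sweep once over $\cL$, comparing $p_{T_1}(v)$ with $p_{T_2}(v)$ for each $v$; every label placed into $\cX$ is assigned to the bucket $\cP_{(p_{T_1}(v),p_{T_2}(v))}$, which simultaneously constructs the family partition $\cP$ and counts $|\cX|$.

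The only subtlety is how to bucket the active labels into the sets $\cP_{(u,w)}$ in linear \emph{total} time rather than, say, incurring a logarithmic factor from storing the keys $(u,w)$ in a balanced search structure. I would handle this with a standard radix/bucket sort on the pairs $(u,w)$ over the alphabet $V$ of size $|\cL|$: since both coordinates are labels drawn from $\cL$, two passes of counting sort group all active labels with the same parent-pair together in $O(|\cL|)$ time and space. This yields the nonempty sets $\cP_{(u,w)}$ directly, and $|\cP|$ as the number of distinct keys encountered.

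The main obstacle, such as it is, is this bucketing step: a naive implementation that keys a hash map or a comparison-based dictionary on the pairs $(u,w)$ would give only an expected-time or an $O(|\cL|\log|\cL|)$ bound, so the argument must explicitly invoke the integer/label structure of the keys to get a deterministic linear bound. Everything else is a routine consequence of the identity $d_{\ell}(T_1,T_2)=|\cX|$ established before the statement, together with the observation that reading off the parent of every vertex in a rooted tree takes time linear in the size of the tree.
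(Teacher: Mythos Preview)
Your proposal is correct and follows essentially the same approach as the paper: two DFTs to build parent arrays, then a single linear sweep over $\cL$ to identify $\cX$ and populate the family partition. You are in fact more careful than the paper on one point: the paper simply says ``we add $v$ to $\cP_{(u,w)}$'' without specifying how the buckets keyed by pairs $(u,w)$ are maintained in worst-case linear time, whereas you explicitly invoke radix/counting sort over the label alphabet to avoid a hidden $\log|\cL|$ factor (and note that for the bare statement $d_\ell(T_1,T_2)=|\cX|$ one does not even need the buckets, only a counter).
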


\begin{proof}
  Since the link-and-cut distance is $|\cX|$, it suffices to
  demonstrate that the family partition $\cP$ can be built in time
  $O(|\cL|)$.  The procedure is as follows: we first do a DFT of tree
  $T_1$, building an array $p_{T_1}(v)$ of the parents in $T_1$,
  indexed by the child $v$.  We build the same array $p_{T_2}(v)$ for
  tree $T_2$.  Then we go through the set $\cL$ of labels, in some
  order: at each label $v$, should $p_{T_1}(v) = u \ne p_{T_2}(v) =
  w$, we add $v$ to $\cP_{(u,w)}$ of the family partition $\cP$.  Then
  we just sum up the sizes of the non-empty subsets $\cP_{(u,w)}$ of
  $\cP$ in order to obtain $|\cX| = d_{\ell}(T_1,T_2)$.  Clearly each
  tree traversal can be done in time $O(|\cL|)$, because $|V_{T_1}| =
  |V_{T_2}| = |\cL|$.  In going through the labels, for each label
  $v$, we either add or do not add the single vertex $v$ to $\cP$, and
  so this procedure takes time $O(|\cL|)$.
\end{proof}

\subsection{Permutation distance}
\label{subsection:permutation}

This subsection is dedicated to proving the following lemma, which
shows that we can compute the permutation distance between two trees
in cubic time.

\begin{lemma}
  \label{lemma:permutation}
  The permutation distance $d_{\pi}(T_1, T_2)$ between isomorphic
  trees $T_1$ and $T_2$ each labelled by $\cL$ can be computed in time
  $O(|\cL|^3)$.
\end{lemma}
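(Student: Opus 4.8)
The plan is to reduce the computation of $d_{\pi}(T_1,T_2)$ to an \emph{optimal labelled isomorphism} problem and then solve the latter by a bottom-up dynamic program over pairs of subtrees, using a bipartite matching at each internal pair. Write $n=|\cL|$ and let $\ell_1\colon V_{T_1}\to\cL$ and $\ell_2\colon V_{T_2}\to\cL$ be the labelling bijections. Since a permutation leaves the topology of $T_1$ untouched, the permuted tree is congruent to $T_2$ if and only if there is a rooted isomorphism $m\colon T_1\to T_2$ for which the permutation applied is exactly $\pi=\ell_2\circ m\circ \ell_1^{-1}$; such a $\pi$ is automatically a bijection, and the label $\ell_1(x)$ sitting at a vertex $x$ of $T_1$ is a fixed point of $\pi$ precisely when $x$ and $m(x)$ carry the same label, i.e. $\ell_1(x)=\ell_2(m(x))$. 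Hence minimizing $|\pi|$ is the same as choosing the rooted isomorphism $m$ that maximizes the number of vertices whose label is preserved, and $d_{\pi}(T_1,T_2)=n-A^{*}$, where $A^{*}$ is this maximum number of \emph{agreements}.

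To compute $A^{*}$, for vertices $a$ of $T_1$ and $b$ of $T_2$ whose subtrees $T_1(a)$ and $T_2(b)$ satisfy $T_1(a)\cong T_2(b)$, let $M(a,b)$ be the maximum number of agreements achievable by an isomorphism between these subtrees. For leaves, $M(a,b)=1$ if $\ell_1(a)=\ell_2(b)$ and $0$ otherwise. For internal vertices, since children are \emph{unordered}, any isomorphism pairs the children $c_{T_1}(a)$ with $c_{T_2}(b)$ by a bijection that matches isomorphic subtrees; I would therefore build a bipartite graph on $c_{T_1}(a)\times c_{T_2}(b)$ with an edge of weight $M(a_i,b_j)$ whenever $T_1(a_i)\cong T_2(b_j)$ (and weight $-\infty$ otherwise), and set $M(a,b)=[\ell_1(a)=\ell_2(b)]+W(a,b)$, where $W(a,b)$ is the value of a maximum-weight perfect matching (via the Hungarian algorithm). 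A perfect matching exists precisely because $T_1(a)\cong T_2(b)$, so the recurrence is well posed; evaluating it bottom-up and reading off $M$ at the roots of $T_1$ and $T_2$ yields $A^{*}$. To make the isomorphism tests constant-time I would first assign canonical names to all subtrees by an Aho--Hopcroft--Ullman-style pass in $O(n\log n)$ time, so that $T_1(a)\cong T_2(b)$ is decided by comparing names.

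The main obstacle is the running-time bound: naively summing the cost $O(r^{3})$ of the Hungarian algorithm over all matched internal pairs of common degree $r$ appears to give $O(n^{4})$. The key fact I would exploit is that two subtrees of the \emph{same} tree that are isomorphic cannot contain one another (a proper subtree is strictly smaller), so all subtrees of a fixed isomorphism class within $T_1$ are pairwise disjoint, and likewise within $T_2$. Writing the total matching cost as $\sum_{c}\alpha_c\beta_c\,r_c^{3}$ over isomorphism classes $c$, where $\alpha_c,\beta_c$ count the class-$c$ subtrees in $T_1,T_2$ and $r_c$ is their common root degree, disjointness gives $\sum_c \alpha_c r_c\le n$ and $\sum_c \beta_c r_c\le n$ (each class-$c$ subtree has at least $r_c$ vertices). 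Putting $x_c=\alpha_c r_c$ and $y_c=\beta_c r_c$, I then bound $\sum_c \alpha_c\beta_c r_c^{3}=\sum_c x_c y_c\, r_c\le (\max_c r_c)\sum_c x_c y_c\le n\cdot\big(\textstyle\sum_c x_c\big)\big(\sum_c y_c\big)\le n^{3}$, using that all terms are nonnegative. This caps the total matching work at $O(n^{3})$; the canonical naming, the tree traversals, and the construction of all bipartite graphs (costing $\sum_c \alpha_c\beta_c r_c^{2}=\sum_c x_c y_c\le n^{2}$) are dominated, which establishes the $O(|\cL|^{3})$ bound.
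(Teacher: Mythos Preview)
Your approach is essentially the paper's: reduce $d_{\pi}$ to minimizing mismatches (equivalently, maximizing agreements) over rooted isomorphisms, then compute this by a bottom-up DP that solves a bipartite assignment at each pair of isomorphic subtrees, with canonical names (the paper uses a level-by-level variant) for the isomorphism tests.

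One small slip in your running-time argument: disjointness of same-class subtrees only yields the per-class bound $\alpha_c r_c\le n$; it does \emph{not} give $\sum_c \alpha_c r_c\le n$, because subtrees from different isomorphism classes may nest. The inequality you need is nonetheless immediate without disjointness: every vertex of $T_1$ is the root of exactly one subtree, so $\sum_c \alpha_c r_c=\sum_{v\in V_{T_1}}|c_{T_1}(v)|=n-1$, and similarly for $T_2$. With this correction your $O(n^{3})$ accounting goes through.
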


We need the following definitions and auxiliary lemmas.  The
\emph{mismatch number} $\Delta(m)$ of some isomorphic mapping $m$ from
tree $T_1$ to tree $T_2$ is the number of vertices whose label is not
conserved by $m$.  More formally, $\Delta(m) = \left| \left\{ u \in
V_{T_1} ~:~ \ell_{T_1}(u) \ne \ell_{T_2}(m(u)) \right\} \right|$,
where $\ell_T : V_T \rightarrow \cL$ is the the one-to-one
correspondence between the vertices $V_T$ of tree $T$ and the set
$\cL$ of labels.  Let $\cI(T_1,T_2)$ be the set
of isomorphic mappings from $T_1$ to $T_2$.  Given isomorphic trees
$T_1$ and $T_2$ each labelled by $\cL$, let the \emph{mismatch
  distance} $\dm(T_1, T_2) = \min \{ \Delta(m) ~:~ m \in \cI(T_1,T_2)
\}$ be the minimum mismatch number of an isomorphic mapping from $T_1$
to $T_2$.
The following equality between permutation distance and mismatch
distance holds.

\begin{lemma}
  \label{lemma:permutation=mismatch}
  Given isomorphic trees $T_1$ and $T_2$ each labelled by $\cL$, it
  follows that $d_{\pi}(T_1, T_2) = \dm(T_1, T_2)$.
\end{lemma}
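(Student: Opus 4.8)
The plan is to set up an explicit, size-preserving bijection between the permutations $\pi$ that transform $T_1$ into $T_2$ and the isomorphic mappings $m \in \cI(T_1, T_2)$, and then to minimize over both sides. The key observation is that applying a permutation to $T_1$ never alters its topology, so the congruence of the relabelled tree with $T_2$ must be witnessed by a \emph{topological} isomorphism --- that is, by an element of $\cI(T_1, T_2)$. This turns the two quantities into minima of the same set of values, counted in two different ways.

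First I would make the vertex/label identification precise by writing $\ell_{T_1} : V_{T_1} \rightarrow \cL$ and $\ell_{T_2} : V_{T_2} \rightarrow \cL$ for the two labelling bijections, and then, for each $m \in \cI(T_1, T_2)$, define the permutation $\pi_m = \ell_{T_2} \circ m \circ \ell_{T_1}^{-1}$ of $\cL$, which is a bijection as a composition of bijections. I would then check that $\pi_m$ transforms $T_1$ into $T_2$: after relabelling, a vertex $u$ of $T_1$ carries the label $\pi_m(\ell_{T_1}(u)) = \ell_{T_2}(m(u))$, so the mapping $m$ itself --- already a topological isomorphism sending root to root --- now also preserves labels and hence witnesses the congruence of $\pi_m(T_1)$ with $T_2$.

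Next I would establish the size identity $|\pi_m| = \Delta(m)$. Unwinding the definitions, a vertex $u$ is counted by $\Delta(m)$ precisely when $\ell_{T_1}(u) \ne \ell_{T_2}(m(u)) = \pi_m(\ell_{T_1}(u))$, i.e. precisely when the label $\ell_{T_1}(u)$ is moved by $\pi_m$; since $\ell_{T_1}$ is a bijection, the mismatched vertices correspond one-to-one to the moved labels. This yields $d_{\pi}(T_1, T_2) \le \dm(T_1, T_2)$. For the reverse inequality I would argue surjectivity of the correspondence: any $\pi$ transforming $T_1$ into $T_2$ comes equipped with a topological isomorphism $m$ from $\pi(T_1)$ to $T_2$ that preserves labels; since $\pi(T_1)$ shares the topology of $T_1$, this $m$ lies in $\cI(T_1, T_2)$, and label preservation forces $\pi = \pi_m$, so $|\pi| = \Delta(m)$ and $\dm(T_1, T_2) \le d_{\pi}(T_1, T_2)$. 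Taking minima on both sides gives the claimed equality.

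The routine part is the bijection bookkeeping; the one place to be careful is the direction and well-definedness of the composition $\ell_{T_2} \circ m \circ \ell_{T_1}^{-1}$, together with the reinterpretation of the isomorphism witnessing congruence of $\pi(T_1)$ with $T_2$ as a genuine element of $\cI(T_1, T_2)$ --- this step relies on relabelling leaving the topology (and in particular the root and its unique child) untouched. I expect no real obstacle beyond keeping the label/vertex identification consistent, especially since $|\pi| = |\pi^{-1}|$ means that even an accidental inversion in the composition would not affect the final counting.
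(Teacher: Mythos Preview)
Your argument is correct and follows essentially the same idea as the paper: both establish a size-preserving correspondence between permutations transforming $T_1$ into $T_2$ and isomorphic mappings in $\cI(T_1,T_2)$, then minimize. Your version is somewhat cleaner --- you make the correspondence explicit via $\pi_m = \ell_{T_2}\circ m\circ\ell_{T_1}^{-1}$ and verify $|\pi_m|=\Delta(m)$ directly, whereas the paper picks a minimum-mismatch $m$, builds the associated $\pi$, and then handles the reverse inequality by contradiction --- but the underlying content is the same.
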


\begin{proof}
  Consider an isomorphic mapping $m$ from $T_1$ to $T_2$ that has the
  minimum mismatch number $\Delta(m) = \dm(T_1,T_2)$ and let $\cL'
  \subseteq \cL$ be the set of labels of the vertices involved in the
  set of mismatching vertices between $V_{T_1}$ and $V_{T_2}$ given by
  $m$.

  Clearly, such labels are in a permutation $\pi$ which rearranges the
  labels of tree $T_1$ to obtain $T_2$, while, by construction of $m$,
  all the other labels will not be perturbed by $\pi$.  Then we need
  to show that such a permutation rearranges the minimum number of
  distinct labels, that is, its size $|\pi| = d_{\pi}(T_1, T_2)$ is
  the permutation distance.  Indeed, assume to the contrary that the
  permutation distance $d_{\pi}(T_1,T_2) < |\pi|$.  This implies the
  existence of a permutation $\pi'$ that rearranges fewer labels than
  $\pi$, \ie, $|\pi'| < |\pi|$.  Then we show that there exists an
  isomorphic mapping $m'$ that has mismatch number less than the one
  of $m$, contradicting the initial assumption.

  Indeed, consider the permutation $\pi'$ and define the mapping $m'$
  from $T_1$ to $T_2$ such that $m'(u) = v$ whenever $\pi'(v) = u$ The
  mapping $m'$ is an isomorphism by the construction of $\pi'$, since
  $m'(\pi'(u)) = u$ for all $u \in V_{T_1}$, and with the application
  of $\pi'$, the two trees are congruent, hence congruency of the
  labels implies isomorphism of the two trees.  This concludes the
  proof that $m'$ is an isomorphism thus leading to a contraction.
\end{proof}

Now the task is to show how we can efficiently find an isomorphic
mapping $m \in \cI(T_1,T_2)$ from tree $T_1$ to tree $T_2$ such that
the mismatch number $\Delta(m)$ is minimized --- in other words, we
need to compute $\dm(T_1, T_2)$.  For tree $T$, let $L_T$ be its set
of leaves, and for vertex $u \in V_T$, let $T|u$ be the subtree of $T$
\emph{rooted} at $u$ --- the connected component of $T'$ containing
$u$, where $T'$ is the tree obtained from $T$ by removing the edge
$(p_T(u),u)$, and $p_T(u)$ is the \emph{parent} of $u$ in $T$.  If $u$
and $v$ are both vertices, we slightly abuse notation using
$\Delta(u,v)$ to mean the mismatch distance of (the only possible
mapping between) $u$ and $v$, that is $\Delta(u,v) = 0$ if
$\ell_{T_1}(u) = \ell_{T_2}(v)$ --- that is, the vertex $u$ of $T_{1}$
and the vertex $v$ of $T_{2}$ have the same label --- and $\Delta(u,v)
= 1$ otherwise.  Recall that the \emph{children} of $u$ in $T$ is the
set $c_T(u) = \{ v ~:~ (u,v) \in E_{T|u} \}$ --- let $\cC_{u,v}$ be
the set of bijective mappings $m : c_{T_1}(u) \rightarrow c_{T_2}(v)$
from the children of $u \in V_{T_1}$, $v \in V_{T_2}$ --- clearly any
isomorphic mapping in $\cI(T_1,T_2)$ is such a mapping when restricted
to $c_{T_1}(u)$ and $c_{T_2}(v)$.  We define the following (recursive)
relationship $D(u,v)$, showing later how we can use it to compute the
mismatch distance:

\[
  D(u,v) = \left\{%
    \begin{array}{ll}
      \Delta(u,v) & \textrm{ if } u \in L_{T_1}, ~ v \in L_{T_2},\\

      \min\limits_{m \in \cC_{u,v}} \sum_{z \in c_{T_1}(u)} D(x,m(z))
      + \Delta(u,v) & \textrm{ if } T_1|u \cong T_2|v, u \not \in
      L_{T_1}, v \not \in L_{T_2},\\

      \infty & \textrm{ otherwise.}\\
    \end{array}
  \right.
\]

We first show how to compute $T_1|u \cong T_2|v$ (true or false) for
all $u \in V_{T_1}$, $v \in V_{T_2}$, since we need it for computing
$D(u,v)$.  We can build this relationship by extending the time $O(n
\log n)$ algorithm of~\cite{campbell-1991-isomorphism}, where
$n=|V_{T_1}|=|V_{T_2}|$, for determining if two rooted trees are
isomorphic.\footnote{Note that there is a linear time algorithm
  in~\cite{ahu-1974}, but it assumes that $\log n$ is fixed, where $n$
  is the number of vertices --- which is the likely the case for all
  practical instances.}  The idea of this algorithm is to first
organize each tree in \emph{levels}, where the level of a vertex is
its distance from the root --- this can be done with a simple DFT of
each tree.  Suppose each tree has $k$ levels, otherwise they do not
have the same number of levels, hence they cannot be isomorphic.
Starting from level $k$ in both trees, we move up level by level
towards the root in both trees simultaneously.  At each level $i$ we
perform the following steps: (1) for each vertex $u$ of each tree $T$
on level $i$, we store a representation of the topology of $T|u$,
computing it recursively from the representations stored in the
children of $u$ on level $i+1$; then (2) sort in each tree the
vertices at level $i$ by representation\footnote{We assume that there
  is a total ordering on the representations of the topologies, they
  are of constant size, and can be compared in constant time --- for
  details see \cite{campbell-1991-isomorphism,ahu-1974}}; and finally
(3) compare the two resulting sorted orders --- only when they are
identical, may we proceed to the next level.  If we make it all the
way to the first level (the root), and we succeed with the 3 steps at
this level, then the two trees are isomorphic, otherwise not.  We can
extend this algorithm with a fourth step: (4) for each pair $u \in
V_{T_1}$, $v \in V_{T_2}$ of vertices on level $i$, if their
representations are identical, then $T_1|u \cong T_2|v$ is true, and
false otherwise.  Clearly, only for pairs of vertices on the same
level, can their subtrees be isomorphic, and so this is an exhaustive
search for all such pairs.  Since the number of vertices compared in
step (4) over all of the levels is no more than $n^2$, this
computation of $T_1|u \cong T_2|v$ for all pairs $u \in V_{T_1}$, $v
\in V_{T_2}$ of vertices requires time $O(n^2)$.  We are now ready to
prove that $D(u,v) = \dm(T_1|u, T_2|v)$ for each vertex $u$ of $T_{1}$
and $v$ of $T_{2}$.

\begin{lemma}
  \label{lemma:mismatch}
  Let trees $T_1$ and $T_2$ each be labelled by $\cL$.  Then (1)
  $D(u,v) = \dm(T_1|u,T_2|v)$ for all pairs $u \in V_{T_1}$, $v \in
  V_{T_2}$.  Moreover, (2) $D(u,v)$ can be computed in time
  $O(|\cL|^3)$.
\end{lemma}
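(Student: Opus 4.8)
The plan is to prove the two assertions separately: part (1) by structural induction on the subtrees, and part (2) by analysing a bottom-up dynamic-programming evaluation of $D$ in which each recursive step is reduced to a bipartite assignment problem. For part (1), I would argue $D(u,v) = \dm(T_1|u, T_2|v)$ by induction on the height of $T_1|u$ (equivalently of $T_2|v$ whenever the two subtrees are isomorphic). The base case is when both $u$ and $v$ are leaves: the only mapping in $\cI(T_1|u,T_2|v)$ sends $u$ to $v$, its mismatch number is exactly $\Delta(u,v)$, and this matches the first line of the definition of $D$. For the remaining pairs I split on whether $T_1|u \cong T_2|v$: if the subtrees are not isomorphic (in particular whenever exactly one of $u,v$ is a leaf), then $\cI(T_1|u,T_2|v)=\emptyset$, so $\dm(T_1|u,T_2|v)=\infty$ under the convention that a minimum over the empty set is $\infty$, agreeing with the third line of $D$.

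The heart of part (1) is the isomorphic internal case. The key observation is that the mismatch number decomposes additively along the subtree structure. Since every isomorphism $m \in \cI(T_1|u, T_2|v)$ must send the root $u$ to the root $v$, its restriction to the children is a bijection $m_0 \in \cC_{u,v}$, and for each $z \in c_{T_1}(u)$ the restriction $m|_{V_{T_1|z}}$ is an isomorphism from $T_1|z$ to $T_2|m_0(z)$; conversely any child bijection $m_0$ together with a choice of isomorphism on each $T_1|z \to T_2|m_0(z)$ glues into a single isomorphism of the whole subtree. As $\{u\}$ and the sets $V_{T_1|z}$ partition $V_{T_1|u}$, we obtain $\Delta(m) = \Delta(u,v) + \sum_{z \in c_{T_1}(u)} \Delta(m|_{V_{T_1|z}})$. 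Because the subtree isomorphisms can be chosen independently once $m_0$ is fixed, minimising yields $\dm(T_1|u,T_2|v) = \Delta(u,v) + \min_{m_0 \in \cC_{u,v}} \sum_{z} \dm(T_1|z, T_2|m_0(z))$, and the induction hypothesis replaces each $\dm(T_1|z, T_2|m_0(z))$ by $D(z, m_0(z))$, recovering exactly the second line of $D$. When $m_0$ pairs non-isomorphic subtrees the corresponding $D$ is $\infty$, which is harmless under the minimum, since the child bijection induced by an actual isomorphism (one exists, as $T_1|u \cong T_2|v$) gives a finite sum.

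For part (2), I would fill the table $D$ by processing pairs $(u,v)$ bottom-up by level, so that all children values are already available. I first build the Boolean table of values $T_1|u \cong T_2|v$ in $O(|\cL|^2)$ time exactly as described before the lemma; the leaf and non-isomorphic entries are then each filled in $O(1)$, for $O(|\cL|^2)$ total. For an internal isomorphic pair $(u,v)$, computing $\min_{m_0 \in \cC_{u,v}} \sum_z D(z, m_0(z))$ is precisely the assignment problem on the complete bipartite graph between $c_{T_1}(u)$ and $c_{T_2}(v)$ with edge weights $D(z,z')$. A perfect matching exists only when $|c_{T_1}(u)| = |c_{T_2}(v)| = d$ (otherwise the subtrees cannot be isomorphic and the entry is $\infty$), and the Hungarian algorithm solves it in $O(d^3)$.

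The step I expect to be the main obstacle is bounding the total matching cost by $O(|\cL|^3)$, which needs an amortised argument rather than a term-by-term estimate. Writing $d_u = |c_{T_1}(u)|$ and $n=|\cL|$, I would use the handshake bounds $\sum_{u \in V_{T_1}} d_u \le n$ and $\sum_{v \in V_{T_2}} d_v \le n$. For a fixed $u$ with $d_u = d$, matching is performed only against those $v$ with $d_v = d$, of which there are at most $n/d$ (each contributes $d$ to $\sum_v d_v \le n$); hence the work charged to $u$ is at most $(n/d)\cdot O(d^3) = O(n\, d_u^2)$. Summing over $u$ then gives a total matching cost of $O(n)\sum_{u} d_u^2 \le O(n)\big(\sum_u d_u\big)^2 \le O(n^3)$, which dominates the other steps and establishes the claimed $O(|\cL|^3)$ bound.
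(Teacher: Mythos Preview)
Your argument is correct and follows the same route as the paper: structural induction for part~(1), and a bottom-up dynamic program that solves a bipartite minimum-weight matching at each internal pair for part~(2). For part~(1) you are a bit more explicit than the paper about the correspondence between isomorphisms $T_1|u \to T_2|v$ and pairs consisting of a child bijection together with subtree isomorphisms, which makes the additive decomposition of the mismatch number transparent; the paper states this decomposition directly for an optimal $m$ without spelling out the gluing direction. The only substantive difference is the amortisation of the total matching cost in part~(2): the paper invokes an $O(|V|\log|V|+|V||E|)$ matching routine and asserts that summing $|V|$ and $|E|$ over all bipartite instances yields at most $n$ and $n^2$ respectively, whereas you use the Hungarian $O(d^3)$ bound together with the counting observation that a vertex $u$ with $d_u=d$ need only be matched against those $v$ with $d_v=d$, of which there are at most $n/d$. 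Your accounting is cleaner and easier to check line by line; both reach the same $O(|\cL|^3)$ bound.
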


\begin{proof}
  (1) It is essentially a proof by induction.  If both $u$ and $v$ are
  leaves, then $T_1|u \cong T_2|v$ is trivially true and
  $D(u,v) = \Delta(u,v)$.  When $T_1|u \cong T_2|v$ does not hold,
  then the mismatch distance is undefined.

  Otherwise, if $u$ and $v$ are internal vertices, and
  $T_1|u \cong T_2|v$, then let $m$ be a bijective mapping from the
  nodes of $T_1|u$ to the nodes of $T_2|v$ minimizing the mismatch
  distance.  By the definition of $\dm$ and the construction of $m$,
  $\dm(T_1|u,T_2|v) = \sum_{z \in c_{T_1}(u)} \dm(T_1|z,T_2|m(z)) +
  \Delta(u,v)$.  By the inductive hypothesis
  $\sum_{z \in c_{T_1}(u)} \dm(T_1|z,T_2|m(z)) = \sum_{z \in
    c_{T_1}(u)} D(z,m(z))$.  Combining these two facts implies that
  $\dm(T_1|u,T_2|v) = \sum_{z \in c_{T_1}(u)} D(z,m(z)) +
  \Delta(u,v)$.  Since $m$ is the bijective mapping minimizing the
  mismatch distance, no other mapping $m'$ can achieve a smaller value
  of $\sum_{z \in c_{T_1}(u)} D(z,m(z)) + \Delta(u,v)$, hence
  $D(u,v) = \dm(T_1|u,T_2|v)$.

  (2) Computing $D(u,v)$ when $u$ and $v$ are leaves requires constant
  time.  When $u$ and $v$ are internal nodes --- assuming we have
  already computed $T_1|u \cong T_2|v$ --- we compute a \emph{minimum
    weight} matching in the weighted bipartite graph with $V =
  c_{T_1}(u) \cup c_{T_2}(v)$ and $E = \{(x,y) ~:~ T_1|x \cong T_2|y,
  ~x \in c_{T_1}(u), ~y \in c_{T_2}(v)\}$ with weight function $w : E
  \rightarrow \mathbb{N}$ such that $w(x,y) = D(x,y)$.  Such a
  matching can be found in time $O(|V| \log |V| + |V||E|)$ using a
  Fibonacci heap~\cite{Fredman:1987:FHU:28869.28874}.  If we sum over
  all of graphs in which we compute these matchings during the
  recursive computation of $D(u,v)$ for all $u \in V_{T_1}$, $v \in
  V_{T_2}$, the number of vertices and edges in each graph, these sums
  will be at most $n$ (vertices) and $n^2$ (edges) respectively, since
  both trees have $n=|V_{T_1}|=|V_{T_2}|=|\cL|$ vertices.  This means
  that this matching procedure will take overall time $O(|\cL|^3)$.
  Computing $T_1|u \cong T_2|v$ for each pair $u \in V_{T_1}$, $v \in
  V_{T_2}$ takes time $O(|\cL|^2)$ overall, hence computing computing
  $D(u,v)$ for all $u \in V_{T_1}$, $v \in V_{T_2}$ has a total
  running time $O(|\cL|^3)$.
\end{proof}

Lemma~\ref{lemma:permutation} then follows from
Lemmas~\ref{lemma:permutation=mismatch} and \ref{lemma:mismatch}.
Notice that, when computing $D(u,v)$ we can also maintain the set
$C(u,v)$ of the labels that are \emph{conserved} in the minimum weight
matchings, that is, those that are not involved in a mismatch.  More
precisely, $C(u,v)$ is equal to the union of the $C(x,y)$ over all
edges $(x,y)$ of the optimal matching.  To that set, we add the label
$\ell_{T_{1}}(u)$ if $\Delta(u,v) = 0$.  Once we have all sets
$C(u,v)$, the permutation that we want to compute involves exactly the
labels not in $C(\lambda_{T_{1}}, \lambda_{T_{2}})$.  More precisely,
the label $\ell_{T_1}(u)$ must be replaced with the label of the
vertex $m(u)$, where the isomorphism $m$ can be constructed from the
perfect matchings of the optimal solution.

\subsection{Rearrangement distance}
\label{subsection:rearrangement}

Finally, we show that deciding the rearrangement distance between two
trees is NP-hard.  We show this by reduction from 3-dimensional
matching, one of Karp's 21 NP-complete
problems~\cite{karp-1972-reducibility}.

In 3-dimensional matching, we are given three disjoint sets $A$, $B$
and $C$, along with a set $\cT$ of triples $(a,b,c)$, such that
$a \in A$, $b \in B$ and $c \in C$ --- essentially a 3-uniform
hypergraph $H$.  A \emph{matching} is then a subset
$\cM \subseteq \cT$ such that for every two triples $(a,b,c) \in \cM$,
$(a',b',c') \in \cM$, it follows that $a \neq a'$, $b \neq b'$ and
$c \neq c'$, that is, all triples of $\cM$ are pairwise disjoint.  It
is then NP-hard to decide for a given $k$ if there is a matching $\cM$
of size $k$~\cite{karp-1972-reducibility}.  It has been proved that
the problem remains NP-hard even in the case of 3-bounded 1-common
3-dimensional matching, which is a restriction of the problem where
the number of occurrences of an element in the triples is at most 3,
and each pair of triples has at most one element in
common~\cite{jiang-2015-approximation}.  We also make use of the
following structure in this proof.

\begin{definition}[movements graph]
  \label{definition:movements}
  Given trees $T_1$ and $T_2$ each labelled by $\cL$, the
  \emph{movements graph} $G$ has an edge for every element
  $\cP_{(u,w)}$ of the family partition $\cP$ of $T_1$ and $T_2$, that
  is, $E_G = \{ (u,w) ~:~ \cP_{(u,w)} \in \cP \}$, while its vertex
  set is $V_G = \bigcup_{(u,w) \in E_G} \{u,w\}$.
\end{definition}

We now prove that computing the rearrangement distance is NP-hard.

\begin{theorem}
  \label{theorem:rearrangement}
  Given trees $T_1$ and $T_2$ each labelled by $\cL$, and some integer
  $k$, it is NP-hard to decide if $d(T_1, T_2) \leq k$.
\end{theorem}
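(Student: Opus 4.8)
The plan is to start from the canonical form guaranteed by Lemma~\ref{lemma:permutation-first}: any transforming sequence can be rewritten, at no extra cost, as a single permutation $\pi$ applied first, followed by link-and-cut operations. Writing $T_1^{\pi}$ for the tree obtained from $T_1$ by relabelling its vertices through $\pi$, and recalling that the link-and-cut distance equals the size of the active set, the rearrangement distance becomes the clean optimization
\[
 d(T_1,T_2) = \min_{\pi}\bigl(|\pi| + |\cX^{\pi}|\bigr),
\]
where $\cX^{\pi}$ is the active set of the pair $T_1^{\pi}$, $T_2$. Taking $\pi$ to be the identity recovers $d_{\ell}(T_1,T_2)=|\cX|$, so the entire difficulty lies in how much a nontrivial permutation can shrink the active set relative to its own size. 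First I would isolate the saving mechanism behind this trade-off: a label $v$ that $\pi$ leaves fixed nevertheless drops out of the active set whenever the relabelling sends the $T_1$-parent of $v$ precisely to the target parent $p_{T_2}(v)$. Thus a single transposition that relabels a node $u$ to $w$ simultaneously \emph{corrects}, for free, every child in the family $\cP_{(u,w)}$; in the movements graph (Definition~\ref{definition:movements}) this is exactly the contribution of the edge $(u,w)$, and it is the phenomenon already exploited by the permutation $(b~d)$ in Figure~\ref{fig:dist}, where one transposition of cost $2$ removes the three active labels $b$, $d$ and the passenger $e$.

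Next I would build the reduction from \emph{3-bounded 1-common 3-dimensional matching}~\cite{jiang-2015-approximation}. Given sets $A$, $B$, $C$ and triples $\cT$, I would design $T_1$ and $T_2$ so that each element of $A\cup B\cup C$ owns a fixed set of labels shared by exactly the triples containing it, and each triple $t=(a,b,c)$ owns a small \emph{selection gadget} admitting one beneficial permutation --- a short cycle on the connector labels of $a$, $b$, $c$ that is net-neutral on its own but drags along a constant number of passenger labels (entries of the relevant families $\cP_{(u,w)}$), yielding a fixed positive saving. Because a permutation sends each label to a unique image, two selected gadgets sharing an element would demand two incompatible images of that element's labels; hence the gadgets that can be applied simultaneously are exactly those indexed by pairwise-disjoint triples, i.e. a matching $\cM\subseteq\cT$. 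Choosing the bound so that $d(T_1,T_2)=|\cX|-(\text{saving per gadget})\cdot k$ is attainable if and only if a matching of size $k$ exists then encodes the 3DM decision problem.

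Correctness splits into the two usual directions. Completeness (matching $\Rightarrow$ small distance) is the routine direction: from a matching $\cM$ of size $k$ I compose the $k$ disjoint gadget permutations into one $\pi$, accumulate the guaranteed savings, and finish the remaining active labels with link-and-cut operations, realizing the target value. Soundness (small distance $\Rightarrow$ matching) is the main obstacle. Here I must bound the saving of an \emph{arbitrary} optimal $\pi$ by that of the best matching, showing that no cross-gadget or otherwise ``creative'' permutation beats disjoint gadget applications. This is precisely where the restrictions of the source problem are indispensable: \emph{3-bounded} caps how many times each element's labels can participate, and \emph{1-common} forces the families $\cP_{(u,w)}$ and the passenger sets to overlap in at most one element, so that savings cannot be super-additively combined across gadgets. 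Turning this into an airtight inequality --- decomposing an arbitrary permutation into independent cycles, attributing its saving to a set of pairwise-disjoint triples, and proving that any extra relabelling costs at least as much as it saves --- is the heart of the proof and the step I expect to be the most delicate.
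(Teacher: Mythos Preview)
Your proposal is correct and takes essentially the same approach as the paper. The paper also reduces from 3-bounded 1-common 3-dimensional matching, realizing your ``connector labels'' as the element vertices $a,b,c$ (children of a common root) and your ``passenger labels'' as two extra children $t_v^1,t_v^2$ per element per triple, cyclically shifted between $T_1$ and $T_2$; the beneficial permutation for a triple is precisely the 3-cycle $(a~b~c)$, costing $3$ and saving all $6$ passengers, while any non-cyclic treatment costs $6$, so the threshold $3n+6(m-n)$ encodes a matching of size $n$. Your soundness worry is justified --- the paper's argument there is also brief, leaning on the movements-graph cycle structure and the 1-common restriction to rule out super-additive savings exactly as you anticipate.
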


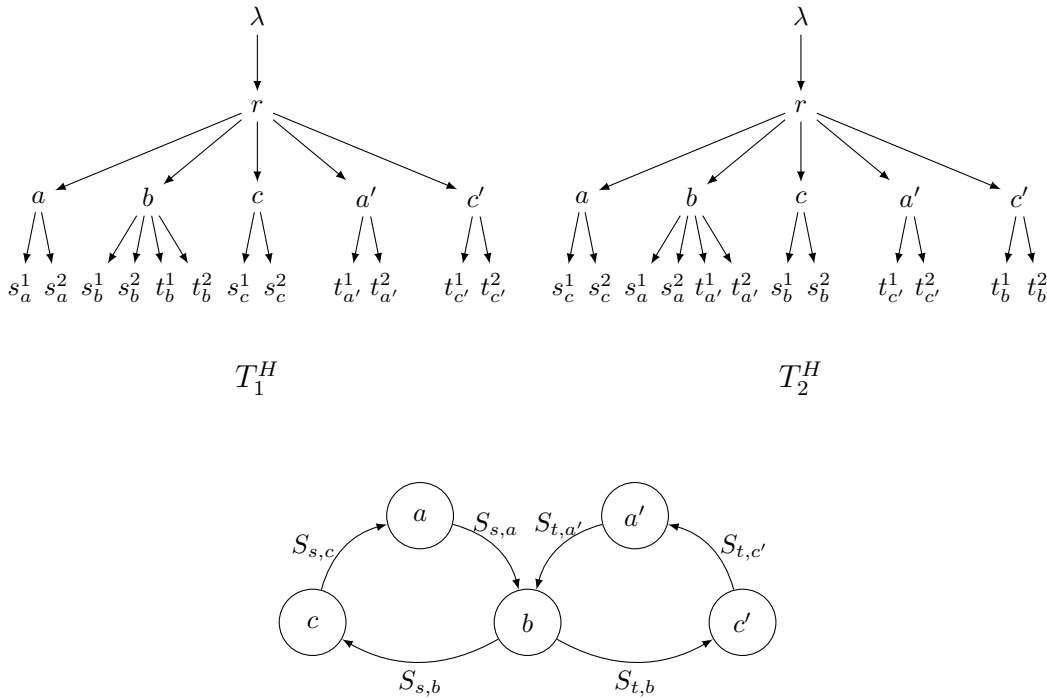
\begin{figure}[ht!]

  \tikzset{
    level 2/.style={
      sibling distance=1.8cm,
    },
    level 3/.style={
      sibling distance=.6cm,
    },
  }

  \centering
  \begin{tikzpicture}[scale=0.8]
    \node (lambda) {$\lambda$}
    child { node {$r$}
      child { node {$a$}
        child { node {$s_a^1$} }
        child { node {$s_a^2$} } }
      child { node {$b$}
        child { node {$s_b^1$} }
        child { node {$s_b^2$} }
        child { node {$t_b^1$} }
        child { node {$t_b^2$} } }
      child { node {$c$}
        child { node {$s_c^1$} }
        child { node {$s_c^2$} } }
      child { node {$a'$}
        child { node {$t_{a'}^1$} }
        child { node {$t_{a'}^2$} } }
      child { node {$c'$}
        child { node {$t_{c'}^1$} }
        child { node {$t_{c'}^2$} } } } ;
    \node[draw=none] at (0,-6) {\Large $\tone$} ;
  \end{tikzpicture}
  \hspace*{.1cm}
  \begin{tikzpicture}[scale=0.8]
    \node (lambda) {$\lambda$}
    child { node {$r$}
      child { node {$a$}
        child { node {$s_c^1$} }
        child { node {$s_c^2$} } }
      child { node {$b$}
        child { node {$s_a^1$} }
        child { node {$s_a^2$} }
        child { node {$t_{a'}^1$} }
        child { node {$t_{a'}^2$} } }
      child { node {$c$}
        child { node {$s_b^1$} }
        child { node {$s_b^2$} } }
      child { node {$a'$}
        child { node {$t_{c'}^1$} }
        child { node {$t_{c'}^2$} } }
      child { node {$c'$}
        child { node {$t_b^1$} }
        child { node {$t_b^2$} } } } ;
    \node[draw=none] at (0,-6) {\Large $\ttwo$} ;
  \end{tikzpicture}

  \vspace*{1cm}

  \begin{tikzpicture}[node distance=2cm]
    \node[state] (a)  {$a$} ;
    \node[state] (b)  [below right of=a] {$b$} ;
    \node[state] (c)  [below left of=a] {$c$} ;
    \node[state] (ap) [above right of=b] {$a'$} ;
    \node[state] (cp) [below right of=ap] {$c'$} ;

    \path
    (a) edge [bend left] node [draw=none,above] {$S_{s,a}$} (b)
    (b) edge [bend left] node [draw=none,below] {$S_{s,b}$} (c)
    (c) edge [bend left] node [draw=none,left]  {$S_{s,c}$} (a) ;

    \path
    (ap) edge [bend right] node [draw=none,above] {$S_{t,a'}$} (b)
    (b)  edge [bend right] node [draw=none,below] {$S_{t,b}$}  (cp)
    (cp) edge [bend right] node [draw=none,right] {$S_{t,c'}$} (ap) ;
  \end{tikzpicture}

  \caption{The trees $\tone$ and $\ttwo$ given instance $H$ of
    3-dimensional matching with $A=\{a,a'\}$, $B=\{b\}$ and
    $C=\{c,c'\}$ and $\cT=\{s = \{a,b,c\}, ~ t = \{a',b,c'\}\}$ (top),
    and the corresponding movements graph for the trees $\tone$ and
    $\ttwo$ (bottom).}
  \label{fig:reduction}
\end{figure}

\begin{proof}
  Reduction from 3-bounded 1-common 3-dimensional matching.  We are
  given an instance $H$ of 3-dimensional matching consisting of a set
  $\cT$ of $m$ triples $(a,b,c)$ over the disjoint sets $A$, $B$, $C$.
  We construct two trees $\tone$ and $\ttwo$ each with $|A| + |B| +
  |C| + 6m + 2$ vertices, for which the rearrangement distance
  $d(\tone, \ttwo) \leq 3n + 6(m-n)$ if and only if $H$ has a
  3-dimensional matching of size $n$.

  Consider such an instance $H$ of 3-dimensional matching as above.
  In the construction, the trees $\tone$ and $\ttwo$ each have a root
  vertex $r$, and a vertex for every element of $A$, $B$ and $C$ ---
  each of which have $r$ as the parent.  To each $v \in \{a,b,c\}$ in
  $\tone$ and triple $t$, we add a set $S_{t,v} = \{t_v^1, t_v^2\}$ of
  two (uniquely labelled) children.  In $\ttwo$, we add the sets
  $S_{t,v}$ of two children to each of these three vertices, but
  cyclically shifted, with respect to $\tone$, \ie, we add $S_{t,a}$
  to $b$, $S_{t,b}$ to $c$, $S_{t,c}$ back to $a$ again.  Note that
  this induces in the movements graph $G$ a cycle
  $\cC_t = \{(a,b), (b,c), (c,a)\}$ --- see
  Figure~\ref{fig:reduction}.  Now, observe that the movements graph
  $G$ will have cycles of length $3$ corresponding to each triple and
  two cycles may share one common vertex $v$ if the triples share
  element $v$.  By Lemma~\ref{lemma:permutation-first}, a sequence of
  operations of total size $d(\tone,\ttwo)$ will consist of a
  permutation followed by a sequence of link-and-cut operations.
  Observe that any permutation involves for each cycle $\cC_t$ an
  edge, two edges or all three edges.  Now, the rearrangement distance
  aims to solve cycles in the movements graph in the sense that after
  the operations, the movements graph has no edges.  Observe that
  given a cycle $\cC_t$ of the movements graph $G$, then the minimum
  cost rearrangement to solve $\cC_t$ consists of applying a
  permutation of size $3$ involving the three vertices of the cycle,
  thus of total cost $3$.  Observe that two cycles sharing a common
  vertex cannot both be solved by a permutation that is a cyclic shift
  of the vertices of the cycle, that is they cannot be both solved
  with cost $3$.  Moreover, permutations of vertices cannot solve more
  than one vertex of a cycle $\cC_t$ if it is not a cyclic shift of
  the vertices of $\cC_t$, as cycles do not share edges.  In case of a
  cyclic shift of two vertices of $\cC_t$, (1) a permutation of size
  $2$ and then four link-and-cut operations are required. If instead
  (2) at most a single vertex of $\cC_t$ is involved in a permutation,
  then six link-and-cut operations are required.  We now detail how
  this implies that $d(\tone, \ttwo) \leq 3n + 6(m- n)$ if and only if
  $H$ has a 3-dimensional matching of size $n$.

  ($\Rightarrow$) Assume that $d(\tone, \ttwo) \leq 3n + 6(m - n)$.
  By the above observation on how cycles of the movements graph $G$
  are solved by the sequence of permutations, cases (1) and (2) for
  solving cycles have the same cost equal to $6$.  Thus the only
  possible way to have a rearrangement distance less than or equal to
  $3n + 6 (m- n) $ is by taking $n$ disjoint cycles solved by the
  permutation operation of cost $3$. This implies a 3-dimensional
  matching of size $n$.

  ($\Leftarrow$) Now, suppose that $H$ has a 3-dimensional matching
  $\cM \subseteq \cT$ of size $n$.  This implies that there are $n$
  triples that are disjoint and thus the movements graph $G$ has $n$
  disjoint cycles. By solving each cycle with a permutation of size
  $3$, $m- n$ cycles are left in the movements graph. The remaining
  cycles, in the worst case, share common vertices with the cycles
  solved by the permutations, and thus they can be solved with a cost
  that is $6$ in the worst case.  Thus we obtain that $d(\tone, \ttwo)
  \leq 3n + 6(m - n)$, completing the proof.
\end{proof}

\section{Bounds and approximation}
\label{section:bounds}

We first give the following important lemma which states that when we
apply a permutation to the labels of $T_1$ obtaining $T'_1$, the size
of the resulting family partition $\cP'$ cannot increase or decrease
too much with respect to the size of $\cP$.

\begin{lemma}
  \label{lemma:partition}
  Given trees $T_1$ and $T_2$ with corresponding active set $\cX$ and
  family partition $\cP$, if $T_1'$ is the tree (isomorphic to $T_1$)
  resulting from the application of permutation $\pi$ of the labels of
  $T_1$, and $\cX'$ and $\cP'$ are the active set and the family
  partition of $T_1'$ and $T_2$, respectively, then $|\cP| - 2|\pi|
  \le |\cP'| \le |\cP| + 2|\pi|$.
\end{lemma}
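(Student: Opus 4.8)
The plan is to reduce everything to edge counts in the movements graph. Recall that $|\cP| = |E_G|$, which is precisely the number of distinct \emph{off-diagonal} pairs $(p_{T_1}(v),p_{T_2}(v))$ with $p_{T_1}(v)\neq p_{T_2}(v)$, as $v$ ranges over $\cL$; writing $G'$ for the movements graph of $T_1'$ and $T_2$, we likewise have $|\cP'| = |E_{G'}|$, counting the distinct off-diagonal pairs $(p_{T_1'}(v),p_{T_2}(v))$. The starting observation is that permuting the labels of $T_1$ by $\pi$ conjugates the parent map: $p_{T_1'}(v) = \pi(p_{T_1}(\pi^{-1}(v)))$ for every label $v$ (extending $\pi$ to fix the root). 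Reindexing the pairs by $v\mapsto\pi(v)$, which is a bijection of $\cL$, the set of off-diagonal pairs defining $\cP'$ becomes exactly $\{(\pi(p_{T_1}(v)),\,p_{T_2}(\pi(v))) : v\in\cL,\ \pi(p_{T_1}(v))\neq p_{T_2}(\pi(v))\}$.

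To bound the change, I would interpolate through the auxiliary family of pairs in which $\pi$ is applied only to the first coordinate, namely $\hat{E} = \{(\pi(p_{T_1}(v)),\,p_{T_2}(v)) : v\in\cL,\ \pi(p_{T_1}(v))\neq p_{T_2}(v)\}$, and prove two one-sided bounds whose sum is $2|\pi|$. First, comparing $\hat{E}$ with $E_G$: the map $(a,w)\mapsto(\pi(a),w)$ is injective on pairs, so it induces a bijection between the underlying pairs (diagonal and off-diagonal alike); the only discrepancy in the off-diagonal counts comes from pairs that cross the diagonal, namely an edge $(a,w)$ with $\pi(a)=w$ that becomes diagonal, or a diagonal pair $(a,a)$ with $\pi(a)\neq a$ that becomes an edge. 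Each such crossing forces the relevant label $a$ to be moved by $\pi$, and for each moved $a$ there is at most one pair of each kind, so $\bigl|\,|\hat{E}| - |E_G|\,\bigr| \le |\pi|$.

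Second, comparing $E_{G'}$ with $\hat{E}$: these two sets of pairs carry the \emph{same} first coordinate $\pi(p_{T_1}(v))$ for each $v$ and differ only in the second coordinate, $p_{T_2}(\pi(v))$ versus $p_{T_2}(v)$. For every label $v$ fixed by $\pi$ these coincide, so the two pair sets agree on all contributions from fixed labels; they can differ only on the at most $|\pi|$ pairs contributed by the moved labels. Hence $\bigl|\,|E_{G'}| - |\hat{E}|\,\bigr| \le |\pi|$. Combining the two bounds by the triangle inequality gives $\bigl|\,|\cP'| - |\cP|\,\bigr| \le 2|\pi|$, which is the claim. (Alternatively, only one inequality need be proved directly, the other following by symmetry: apply the established bound to the pair $T_1'$, $T_2$ with the permutation $\pi^{-1}$ that sends $T_1'$ back to $T_1$, using $|\pi^{-1}| = |\pi|$.)

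The main obstacle is the first bound. The temptation is to count the labels whose parent pair changes, but this is the wrong quantity: moving a single label that is the parent of many children perturbs arbitrarily many pairs, yet these merely \emph{shift} from one first coordinate to another without altering the net number of distinct off-diagonal pairs. The injectivity of $(a,w)\mapsto(\pi(a),w)$ is exactly what makes this net change collapse onto the diagonal crossings, each chargeable to a distinct moved element. Getting this bookkeeping right — counting distinct off-diagonal pairs rather than affected labels — is the crux of the argument.
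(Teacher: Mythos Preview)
Your argument is correct and takes a genuinely different route from the paper. The paper argues locally: it fixes a single perturbed label $a$ with $\pi(a)=b$, examines the effect on the neighbourhood of $a$ (its children in $T_1$ and its parent), and performs a case analysis to conclude that each perturbed label causes at most two deletions and two additions of classes in the family partition. Your argument is global and algebraic: you recast $|\cP|$ as the number of distinct off-diagonal parent pairs, use the conjugation formula $p_{T_1'}(v)=\pi(p_{T_1}(\pi^{-1}(v)))$, interpolate through an auxiliary set $\hat{E}$, and control each of the two steps by $|\pi|$ via injectivity of $(a,w)\mapsto(\pi(a),w)$ and the observation that fixed labels contribute identical pairs. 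What this buys you is that the many-children issue --- a single relabelled vertex with high degree perturbing arbitrarily many pairs --- is handled automatically by the bijection, with no case analysis; the paper instead must recognise that those changes are mere substitutions $\cP_{(a,c)}\leadsto\cP'_{(b,c)}$ that do not alter the count. Conversely, the paper's approach is more concrete and ties directly into the structural definitions, whereas your interpolation through $\hat{E}$ is slick but requires the reader to track that the first step compares sets related by a bijection on the full pair space (so only diagonal crossings matter) while the second compares sets that agree pair-by-pair on fixed labels. Your final paragraph identifies exactly the pitfall the paper's local argument has to navigate.
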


\begin{proof}
  Let $a$ be some label of $T_1$ which has been perturbed by
  permutation $\pi$, \ie, $\pi(a) = b \ne a$.  The new family
  partition $\cP'$ is obtained from $\cP$ by means of deletions,
  insertions and substitutions of subsets.  The crucial observation is
  that such an operation will only affect the \emph{neighborhood} of
  $a$, namely the (possibly empty) set of its children $c_{T_1}(a)$
  and its parent $p_{T_1}(a)$ in $T_1$.  Let us consider each child
  $v \in c_{T_1}(a)$ first.  We have the following cases.
  \begin{itemize}
  \item($v\in\cP_{(a,b)}\subseteq\cX$): since $\pi$ makes $b$ the
    parent of $v$, which is exactly the parent of $v$ in $T_1$,
    $v \notin \cX'$ and $\cP_{(a,b)}$ will be missing from $\cP'$;
  \item ($v\in\cP_{(a,c)}\subseteq\cX$, $c\neq b$): after applying
    $\pi$, $v$ will belong to set $\cP'_{(b,c)}$, thus $\cP_{(a,c)}$
    will be replaced by $\cP'_{(b,c)}$ in $\cP'$;
  \item ($v\notin\cX$): then $v \in \cP'_{(b,a)}$ in $\cP'$, thus
    $\cP'$ might have an extra element with respect to $\cP$.
  \end{itemize}

  Consider now the possible effects of $\pi$ on $p_{T_1}(a)$. There
  are two possible scenarios:
  \begin{itemize}
  \item ($b\in \cP_{(p_{T_1}(b),p_{T_2}(b))} \subseteq \cX$): if
    $p_{T_2}(b)=p_{T_1}(a)$, then $b \notin \cX'$ and if $b$ was the
    only element of $\cP_{(p_{T_1}(b),p_{T_2}(b))}$, the latter will
    be missing from $\cP'$; else,
    $b \in \cP'_{(p_{T_1}(a),p_{T_2}(b))}$, thus
    $\cP_{(p_1(b),p_2(b))}$ will be replaced by
    $\cP'_{(p_{T_1}(a),p_{T_2}(b))}$ in $\cP'$;
  \item ($b\notin\cX$): then $b \in \cP'_{(p_{T_1}(a),p_{T_2}(b))}$ in
    $\cP'$, thus $\cP'$ might have an extra element with respect to
    $\cP$.
  \end{itemize}
  In summary, $\cP'$ is obtained from $\cP$ with up to two deletions
  and two additions of sets in the family partition for each label
  involved in the permutation $\pi$, thus the result follows.
\end{proof}

In the special case where one of the trees, \eg, $T_1$, is
\emph{binary}, \ie, each node has up to two children, we have the
following lemma connecting link-and-cut and rearrangement distance.

\begin{lemma}
  \label{lemma:apx}
  Given $T_1$ a binary tree, $T_2$ any tree, we have that
  $d_{\ell}(T_1,T_2)\le 4\cdot d(T_1,T_2)$.
\end{lemma}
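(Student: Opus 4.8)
The plan is to reduce everything to the two quantities we can actually control: the size $|\pi|$ of the permutation part of an optimal rearrangement sequence, and the size of the active set it leaves behind. First I would invoke Lemma~\ref{lemma:permutation-first} to put some optimal sequence realizing $d(T_1,T_2)$ into canonical form, so that it consists of a single permutation $\pi^*$ followed by a block of link-and-cut operations. Writing $T_1' = \pi^*(T_1)$ and letting $\cX'$ (resp.\ $\cP'$) be the active set (resp.\ family partition) of $T_1'$ and $T_2$, the link-and-cut block must perform at least $d_{\ell}(T_1',T_2) = |\cX'|$ operations, so that $d(T_1,T_2) \ge |\pi^*| + |\cX'|$. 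It therefore suffices to bound $d_{\ell}(T_1,T_2) = |\cX|$ from above by $4(|\pi^*| + |\cX'|)$.

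The key step exploiting the hypothesis is the inequality $|\cX| \le 2|\cP|$. Here I would use that $T_1$ is binary: for a fixed parent $u$ in $T_1$ there are at most two labels $v$ with $p_{T_1}(v) = u$, hence $\sum_w |\cP_{(u,w)}| \le 2$ and in particular every nonempty block $\cP_{(u,w)}$ has at most two elements. Summing the block sizes gives $|\cX| = \sum_{(u,w)} |\cP_{(u,w)}| \le 2|\cP|$.

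It then remains to control $|\cP|$ by the two quantities appearing in $d(T_1,T_2)$. Lemma~\ref{lemma:partition}, applied to the passage from $T_1$ to $T_1' = \pi^*(T_1)$, yields $|\cP| \le |\cP'| + 2|\pi^*|$; and since $\cP'$ is a partition of $\cX'$ into nonempty blocks we have $|\cP'| \le |\cX'|$. Chaining these bounds gives $d_{\ell}(T_1,T_2) = |\cX| \le 2|\cP| \le 2|\cP'| + 4|\pi^*| \le 2|\cX'| + 4|\pi^*| \le 4(|\pi^*| + |\cX'|) \le 4\,d(T_1,T_2)$, which is the claim.

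The main thing to get right is the bookkeeping of the two independent factors of two and why they multiply rather than add: one factor comes from the binary hypothesis (each family block has at most two members, converting $|\cX|$ into $|\cP|$), and the other comes from the $2|\pi^*|$ slack in Lemma~\ref{lemma:partition}, which is itself doubled when we pass back from $|\cP|$ to $|\cX|$. I would also double-check the one place where binariness is essential: without it a single block $\cP_{(u,w)}$ could be arbitrarily large, so the inequality $|\cX| \le 2|\cP|$ fails and the whole argument collapses.
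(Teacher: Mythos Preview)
Your proof is correct and follows essentially the same route as the paper's: canonical form via Lemma~\ref{lemma:permutation-first}, the binary bound $|\cX|\le 2|\cP|$, and Lemma~\ref{lemma:partition} to relate $|\cP|$ to $|\cP'|$ and $|\pi^*|$. The only cosmetic difference is that the paper writes the chain as a lower bound on $d(T_1,T_2)$ (using $|\cX'|\ge|\cP'|$ to get $d(T_1,T_2)\ge |\pi|+|\cP'|\ge \tfrac{|\cP|+|\cP'|}{2}\ge \tfrac{|\cP|}{2}$) rather than as an upper bound on $|\cX|$, but the ingredients and the two factors of~$2$ are identical.
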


\begin{proof}
  Suppose that $T_2$ is optimally obtained from $T_1$ by applying a
  permutation $\pi$ of the labels followed by a number of link-and-cut
  operations --- something we can assume in virtue of
  Lemma~\ref{lemma:permutation-first}.  Let $T'_1$ be the tree
  resulting from the application of permutation $\pi$ of the labels of
  $T_1$, $\cX'$ and $\cP'$ the active set and family partition of
  $T'_1$ and $T_2$, respectively.  By the construction of the family
  partition, the optimal number of link-and-cut operations to obtain
  $T_2$ from $T'_1$ is at least $|\cP'|$; we thus have that
  $d(T_1,T_2)= |\pi| + |\cX'| \ge |\pi| + |\cP'| $. Moreover, Lemma
  \ref{lemma:partition} says that $|\pi|\ge \frac{|\cP|-|\cP'|}{2}$,
  thus $d(T_1,T_2)\ge \frac{|\cP|-|\cP'|}{2} + |\cP'| =
  \frac{|\cP|}{2} + \frac{|\cP'|}{2} \ge \frac{|\cP|}{2}$.  Now, since
  $T_1$ is binary, each set in the family partition $\cP$ consists of
  up to two elements (the elements of $\cP_{(x,y)}$ are the ones among
  the children of $x$ in $T_1$ that becomes the children of $y$ in
  $T_2$, thus they cannot be more than the number of children of
  $x$). It follows that $|\cX|=d_{\ell}(T_1,T_2)\le 2\cdot|\cP|$,
  hence $d(T_1,T_2)\ge \frac{d_{\ell}(T_1,T_2)}{4}$.
\end{proof}

Importantly, we note that Lemma \ref{lemma:apx} states that the
link-and-cut distance algorithm provides a linear 4-approximation for
the rearrangement distance when at least one of the trees involved is
binary.  We have the following corollary from Lemma~\ref{lemma:apx}
and Lemma~\ref{lemma:edge}.



\begin{corollary}
  \label{corollary:apx}
  There exists a polynomial time $4$-approximation algorithm for the
  rearrangement distance problem for binary trees.
\end{corollary}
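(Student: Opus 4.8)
The plan is to recognize that the linear-time link-and-cut distance algorithm of Lemma~\ref{lemma:edge} \emph{is} the desired $4$-approximation, with Lemma~\ref{lemma:apx} supplying the approximation guarantee. So the proof amounts to assembling these two lemmas together with one easy inequality, and verifying that the running time is polynomial.

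First I would record the easy inequality $d(T_1,T_2) \le d_{\ell}(T_1,T_2)$. Any sequence of link-and-cut operations that transforms $T_1$ into $T_2$ is, in particular, a valid sequence of rearrangement operations (namely one using the identity permutation), and by Definition~\ref{definition:rearrangment} its size equals its number of link-and-cut operations. Hence the shortest such sequence is a feasible rearrangement sequence whose size bounds the rearrangement distance from above, giving $d(T_1,T_2)\le d_{\ell}(T_1,T_2)$.

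Next I would describe the algorithm itself. By the symmetry of the distance measures noted in Section~\ref{section:preliminaries}, I may assume the binary tree is $T_1$. Using Lemma~\ref{lemma:edge} I build the family partition $\cP$ and compute $d_{\ell}(T_1,T_2)=|\cX|$ in time $O(|\cL|)$; from $\cP$, ordering the operations it encodes by a depth-first traversal of $T_1$ as described after Example~\ref{ex:partition}, I also output an explicit sequence of link-and-cut operations of exactly this size. Thus the procedure returns a genuine feasible rearrangement sequence rather than a mere numerical estimate.

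Finally I would invoke the guarantee. Since $T_1$ is binary, Lemma~\ref{lemma:apx} gives $d_{\ell}(T_1,T_2)\le 4\cdot d(T_1,T_2)$, and combining this with the inequality above yields $d(T_1,T_2)\le d_{\ell}(T_1,T_2)\le 4\cdot d(T_1,T_2)$, so the returned value and sequence are within a factor $4$ of the optimum; since the whole computation runs in time $O(|\cL|)$, it is polynomial. I do not expect a genuine obstacle here, as all the content lives in Lemmas~\ref{lemma:apx} and~\ref{lemma:edge}; the only points demanding care are the lower-bound inequality $d(T_1,T_2)\le d_{\ell}(T_1,T_2)$, which ensures the output is never an underestimate and hence is a legitimate approximation, and the reduction (by symmetry) to the case where the binary tree is $T_1$ so that Lemma~\ref{lemma:apx} applies.
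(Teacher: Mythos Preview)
Your proposal is correct and is exactly the approach the paper takes: the corollary is stated as an immediate consequence of Lemma~\ref{lemma:apx} and Lemma~\ref{lemma:edge}, and the text just before it already notes that the link-and-cut distance algorithm is the $4$-approximation. Your write-up simply spells out the two details the paper leaves implicit, namely the feasibility inequality $d(T_1,T_2)\le d_{\ell}(T_1,T_2)$ and the use of symmetry to place the binary tree in the role of $T_1$, both of which are sound.
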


\section{Fixed parameter tractability}
\label{section:fpt}

This section is devoted to showing that computing the rearrangement
distance between trees $T_{1}$ and $T_{2}$ is fixed-parameter
tractable, essentially via the bounded search tree
technique~\cite{ParameterizedComplexity}.  In this case, the instance
also contains a parameter $k$: in time $O((4k)^{2k^{2}}n)$ we (1)
determine if $d(T_{1}, T_2)\le k $ and, if this is the case, (2) find
the minimum sequence of operations transforming $T_{1}$ into $T_{2}$.

The main idea of our algorithm is that, in virtue of
Lemma~\ref{lemma:permutation-first}, we can reorder the sequence of
operations that transforms $T_{1}$ into $T_{2}$ so that all
permutations precede the link-and-cut operations.  Let $T^{*}$ be the
tree obtained from $T_{1}$ using only permutations and such that we
can optimally obtain $T_{2}$ from $T^{*}$ using only link-and-cut
operations.  Then $d(T_1,T_2) = d_{\pi}(T_{1}, T^{*}) +
d_{\ell}(T^{*}, T_{2})$.  Our algorithm consists of showing that
$d_{\pi}(T_{1}, T^{*})$ is related to the size of the family
partition, and that we can compute $d_{\ell}(T^{*}, T_{2})$ in linear
time.

The main consequence of Lemma~\ref{lemma:permutation-first} here is
that we can restrict our attention to permutations first (to obtain a
tree $T^{*}$), and to link-and-cut operations afterwards.  Finding
such a tree $T^{*}$ is easier when we want to determine if the
rearrangement distance $d(T_1,T_2)$ is at most $k$.

In fact, a consequence of Lemma~\ref{lemma:partition} is that
$d(T_{1}, T_2) \geq d_{\pi}(T_{1}, T^{*})\geq |\cP|/2 $, where $\cP$
is the family partition associated with $T_{1}$ and $T_{2}$.  Notice
that any sequence of operations that transforms $T_{1}$ into $T_2$
also transforms $\cX$ into the empty set --- thus $\cP$ into the empty
partition.

Since $d(T_{1}, T_2) \geq |\cP|/2 $, the first step of our algorithm
is to compute the family partition $\cP$ of $T_{1}$ and $T_{2}$ and
verify that $k\ge |\cP|/2$.  If that inequality is not satisfied,
then, since as stated above $d_\pi(T_{1}, T^{*}) \geq |\cP|/2 $, it
would follow that $d(T_{1}, T_2) > k $.  Hence we can focus on the
instances where $k\ge |\cP|/2$, that is $|\cP|\le 2k$.  Since the
family partition is sufficiently small, we can compute all sequences
of permutations of at most $k$ labels of $\cX$ in time
$O((4k)^{2k^{2}})$.  In fact, each of the permutations involves one of
the $2^{2k}$ subsets of vertices of $\cX$, and there can be at most
$(2k)!$ permutations of a set of $2k$ elements.  Overall there are at
most $\left(2^{2k}(2k)!\right)^{k}$ such sequences: it is trivial to
organize them in a search tree that can be generated and traversed in
linear time, and some crude upper bound results in the desired time
bound.  Let $\mathcal T$ be the set of trees that are obtained by
applying to $T_{1}$ the sequence of operations corresponding to a node
of the search tree.

The second part of our algorithm is to compute $d_{\ell}(T, T_{2})$ for
each tree in $T \in \cT$, which, by Lemma~\ref{lemma:edge}, requires
$O(n)$ time for each tree, keeping track of the tree $T^{*}$
minimizing $d_{\pi}(T_{1}, T^{*}) + d_{\ell}(T^{*}, T_{2})$.  The algorithm
has therefore $O((4k)^{2k^{2}}n)$ time complexity.

\section{Open problems}
\label{section:open}

In this paper we provide a NP-hardness proof of the rearrangement
distance for trees with vertices of unbounded degree. The
computational complexity of the rearrangement distance in the case of
bounded degree trees remains an open problem.  Mainly it would be of
theoretical interest to see if it is still NP-hard for binary trees.
On the other hand, we provide a constant approximation algorithm for
the rearrangement distance of binary trees. Extending this result to
general trees is still open. Such a result could be of interest in
developing practical algorithms for comparing tumor phylogenies.

\ifdefined \draftmode
\clearpage
\fi
\subparagraph*{Acknowledgments.}

The authors wish to thank Mauricio Soto Gomez for the inspiring
discussions.





\bibliography{lbtr_ref}

\end{document}